\setlist[enumerate]{label=(\roman*),align=left,labelsep=.5em,leftmargin=*,widest=ix,itemsep=0pt,parsep=0pt,topsep=.3\baselineskip}
\newtheorem{Thm}{Theorem}
\newtheorem{Prop}[Thm]{Proposition}
\newtheorem{Cor}[Thm]{Corollary}
\theoremstyle{definition}
\newtheorem{Ex}[Thm]{Example}
\newcommand{\rank}{\operatorname{rank}}
\newcommand{\Tr}{\operatorname{Tr}}
\newcommand{\id}{\mathbb{I}}
\newcommand{\sep}{\textit{SEP}}
\newcommand{\ppt}{\textit{PPT}}
\newcommand{\red}{\textit{RED}}
\newcommand{\maj}{\textit{MAJ}}
\newcommand{\cond}{\textit{CEN}}
\newcommand{\ut}{\textit{UND}}
\newcommand{\uo}{\textit{UND}_{\rightarrow}}
\newcommand{\condl}{\textit{CEN}_{\!\leftarrow}}
\newcommand{\condrr}{\textit{CEN}_{\!\rightarrow}}
\begin{document}

\title{Maximal extension on converse monogamy of entanglement for tripartite pure states}

\author{Junhyeong An}\email{hsj0356@khu.ac.kr}
\affiliation{Department of Mathematics and Research Institute for Basic Sciences, Kyung Hee University, Seoul 02447, Korea}

\author{Soojoon Lee}\email{level@khu.ac.kr}
\affiliation{Department of Mathematics and Research Institute for Basic Sciences, Kyung Hee University, Seoul 02447, Korea}
\affiliation{School of Computational Sciences, Korea Institute for Advanced Study, Seoul 02455, Korea}

\pacs{
03.67.Mn, 
03.65.Ud  
}
\date{\today}

\begin{abstract}
Unlike classical correlations, entanglement cannot be freely shared among multiple parties.
This unique feature of quantum systems is known as the monogamy of entanglement.
While it holds for all multipartite pure states,  its converse --- weak entanglement between two parties enforces strong entanglement with a third party --- occurs only under specific conditions.
In particular, Hayashi and Chen [Phys. Rev. A \textbf{84}, 012325 (2011)] demonstrated a qualitative version of the converse monogamy of entanglement (CMoE) for tripartite pure states by employing a hierarchy of bipartite entanglement defined through the relations among various separability criteria, and Singh and Datta [IEEE Trans. Inf. Theory \textbf{69}, 6564 (2023)] later extended this notion of the CMoE from the viewpoint of distillability under one-way or two-way classical communication.
In this work, we extend their results to the CMoE with broader conditions, and furthermore show that our extensions are maximal with respect to the hierarchies they considered.
\end{abstract}

\maketitle

\section{Introduction}
Quantum entanglement plays a central role in quantum information theory. 
It serves as a fundamental resource for many quantum information processing tasks, including quantum teleportation~\cite{bennett1993teleporting}, superdense coding~\cite{bennett1992communication}, and quantum cryptography~\cite{bennett2014quantum}.
To utilize these applications effectively, it is essential to understand how entanglement is distributed among multiple parties.

The distribution of entanglement among multiple parties is subject to intrinsic restrictions that have no classical counterpart.
This restriction is known as the \textit{monogamy of entanglement}~\cite{coffman2000distributed,terhal2004entanglement}. 
For instance, in a multipartite system, if a pair of parties is maximally entangled, then neither can share entanglement with any part of the remaining system. 
In particular, for a tripartite pure state shared by Alice, Bob, and Charlie, if Alice and Bob are strongly entangled, they must be weakly entangled with Charlie.

On the other hand, the converse statement --- if a bipartite marginal state of a tripartite pure quantum state is weakly entangled, the other ones must be strongly entangled --- does not generally hold. 
For example, the GHZ state provides a clear counterexample, since all of its two-qubit reduced states are separable. 
Nevertheless, by imposing suitable conditions on a reduced bipartite state, one can meaningfully formulate the \emph{converse monogamy of entanglement} (CMoE) within tripartite pure states.  
Hayashi and Chen~\cite{hayashi2011weaker} introduced the CMoE based on a hierarchy obtained from several separability criteria, and later Singh and Datta~\cite{singh2023fully} proposed a similar formulation from the perspective of distillability.
Although these two studies have approached the CMoE from different viewpoints, the structural connection between them or the maximal range in which the CMoE is valid have not yet been fully clarified.

In this work, we reconstruct the two formulations of the CMoE within extension hierarchical framework.
Our results progressively relax the conditions introduced in the previous formulations, and provide explicit examples that cannot be explained by earlier results. 
Furthermore, by identifying examples in which the CMoE no longer holds, we clarify the boundary beyond which the CMoE fails, thereby demonstrating that our two formulations presented here constitute the maximal extensions of the CMoE.

The remainder of this paper is organized as follows. 
In Sec.~\ref{sec:preli}, we review the separability criteria we here consider, two slightly different hierarchies with respect to the criteria, and the previous results on the CMoE. 
In Sec.~\ref{sec:main}, we present two forms of generalizations of the CMoE and prove, through explicit examples, that these generalizations constitute its maximal extensions. 
Finally, in Sec.~\ref{sec:con}, we summarize our conclusions and discuss possible directions for future work.

\section{Preliminaries}\label{sec:preli}

To rigorously define the CMoE, one requires a hierarchy of bipartite entanglement that characterizes what it means for entanglement between two parties to be \textit{weak} or \textit{strong}. 
In this section, we introduce several separability criteria that are necessary for defining the CMoE and summarize their hierarchical relationships. 
This hierarchy serves as basic knowledge for understanding the two forms of the CMoE, which will be reviewed in the following subsections. 

\subsection{Separability criteria and hierarchies of bipartite entanglement}\label{subsec:def}

The nine classes associated with separability criteria are defined as follows.  
Let $\rho_{AB}$ be a bipartite quantum state acting on the Hilbert space
$\mathcal{H}_A \otimes \mathcal{H}_B$, and let $\rho_A = \Tr_B(\rho_{AB})$ and $\rho_B = \Tr_A(\rho_{AB})$ denote its reduced states.

\begin{enumerate}
    \item $\sep$: $\rho_{AB} \in \sep$ if and only if $\rho_{AB} = \sum_k p_k\, \rho_k^A \otimes \rho_k^B$.
    \item $\ppt$: $\rho_{AB} \in \ppt$ if and only if $\rho_{AB}^\Gamma \ge 0$, where $\Gamma$ denotes partial transposition.
    \item $\ut$: $\rho_{AB} \in \ut$ if and only if $\rho_{AB}$ is \emph{undistillable}, i.e., there exists no local operations and classical communication (LOCC) protocol converting $\rho_{AB}^{\otimes n}$ into $\ket{\phi^+}_{AB}$ with a nonvanishing asymptotic yield as $n \to \infty$. 
    If $\rho_{AB} \notin \ut$, then $\rho_{AB}$ is called \emph{distillable}.
    \item $\uo$: $\rho_{AB} \in \uo$ if and only if $\rho_{AB}$ is \emph{one-way undistillable with respect to $A \to B$}, 
    i.e., no one-way LOCC protocol can asymptotically produce $\ket{\phi^+}_{AB}$.
    \item $\red$: $\rho_{AB} \in \red$ if and only if $\rho_A \otimes I_B \ge \rho_{AB}$ and $I_A \otimes \rho_B \ge \rho_{AB}$.
    \item $\maj$: $\rho_{AB} \in \maj$ if and only if $\rho_A \succ \rho_{AB}$ and $\rho_B \succ \rho_{AB}$, 
    where $\succ$ denotes the majorization relation.
    \item $\condl$: $\rho_{AB} \in \condl$ if and only if $S(\rho_{AB}) \ge S(\rho_A)$, where $S$ is the von Neumann entropy.
    \item $\condrr$: $\rho_{AB} \in \condrr$ if and only if $S(\rho_{AB}) \ge  S(\rho_B)$. 
    \item $\cond$: $\rho_{AB} \in \cond$ if and only if $\rho_{AB} \in \condl \cap \condrr$.
\end{enumerate}

\begin{figure}
	\centering
	\includegraphics[width=1.0\linewidth]{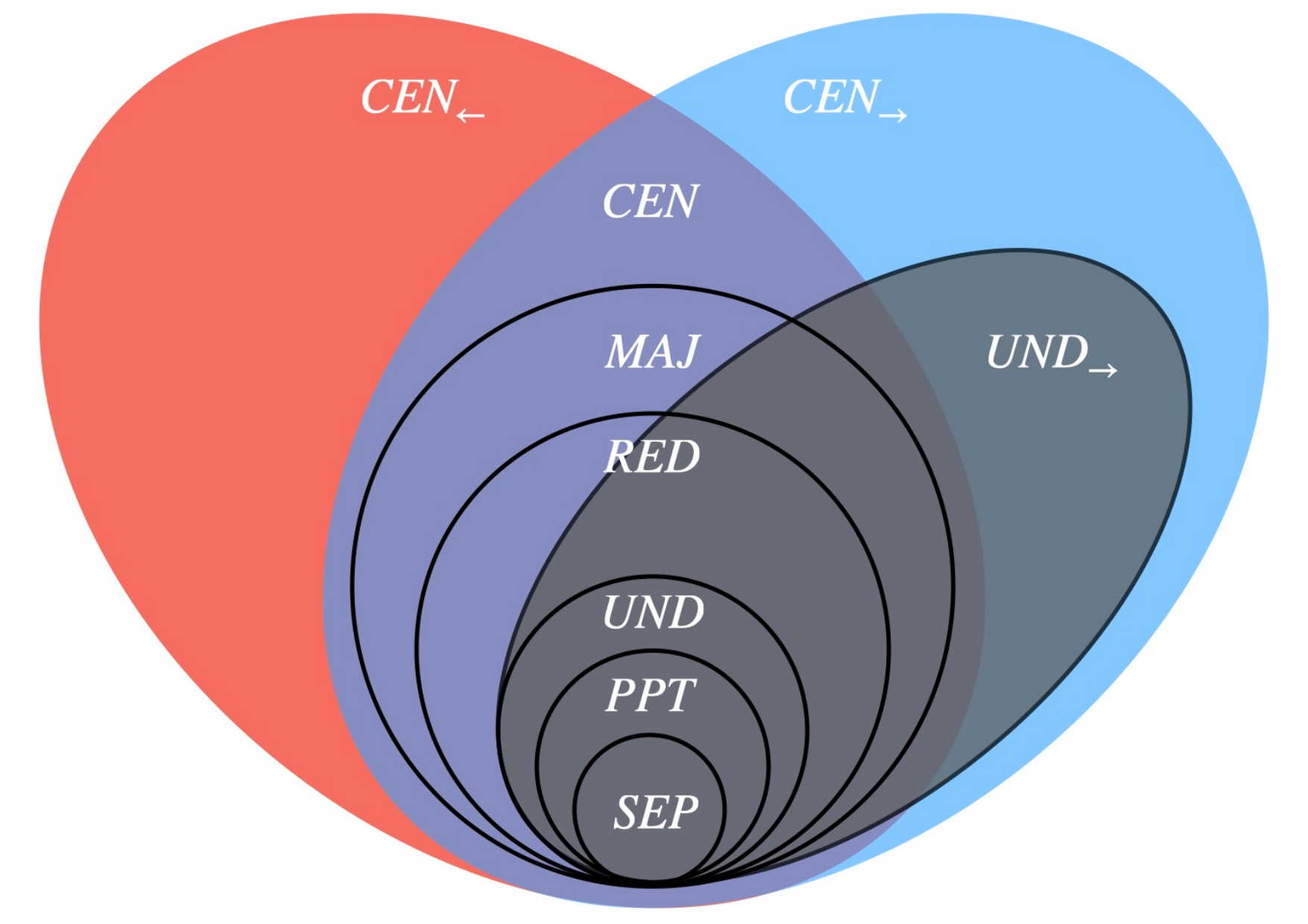}
	\caption{Hierarchy of bipartite states with respect to separability criteria.}
	\label{Fig:Hierarchy}
\end{figure}

From these definitions, the following implication chains hold~\cite{horodecki1996necessary,horodecki1998mixed,horodecki1999reduction,hiroshima2003majorization}, which is depicted in Fig.~\ref{Fig:Hierarchy}:
\begin{align}
&(i)\Rightarrow(ii)\Rightarrow(iii)\Rightarrow(v)\Rightarrow(vi)\Rightarrow(ix),
\label{eq:hierarchy1}\\
&(i)\Rightarrow(ii)\Rightarrow(iii)\Rightarrow(iv)\Rightarrow(viii).
\label{eq:hierarchy2}
\end{align}
In particular, the converse of $(ii)\Rightarrow(iii)$ remains an open problem~\cite{horodecki1998mixed,divincenzo2000evidence}, 
and the other reverse implications do not generally hold~\cite{horodecki1997separability,divincenzo2000evidence,nielsen2001separable,bennett1996mixed}.

This hierarchy may reflect the relative strength of entanglement: for two bipartite quantum states $\rho$ and $\sigma$, if $\rho$ is in one class and $\sigma$ is not, then we can consider $\rho$ to have weaker entanglement than $\sigma$.
Therefore, the hierarchy provides a natural framework for comparing the strength of bipartite entanglement. 
In the context of the CMoE, it asserts that if one pair of subsystems in a tripartite quantum system exhibits weak entanglement, the complementary pair must exhibit strong entanglement, as described in the next two subsections.

\subsection{Converse monogamy of entanglement based on qualitative hierarchy}

We begin with the result of Hayashi and Chen~\cite{hayashi2011weaker}, who first formalized the notion of the CMoE for tripartite pure states using the above hierarchy of separability criteria.  
Their result shows that a strong form of hierarchical collapse occurs 
when one of the reduced states belongs to $\ppt$, as follows. 

\begin{Prop}\label{thm:HC}
Let $\ket{\psi}_{ABC}$ be a tripartite pure state with the reduced state $\rho_{AC} \in \ppt$. 
Then, for the reduced state $\rho_{AB}$, the following are equivalent:
\begin{enumerate}
    \item $\rho_{AB} \in \sep$,
    \item $\rho_{AB} \in \ppt$,
    \item $\rho_{AB} \in \ut$,
    \item $\rho_{AB} \in \red$,
    \item $\rho_{AB} \in \maj$,
    \item $\rho_{AB} \in \cond$.
\end{enumerate}
\end{Prop}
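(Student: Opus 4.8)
The forward direction is essentially free: the chain \eqref{eq:hierarchy1} already gives $\sep\Rightarrow\ppt\Rightarrow\ut\Rightarrow\red\Rightarrow\maj\Rightarrow\cond$, so items (i) through (vi) of the statement satisfy (i)$\Rightarrow$(ii)$\Rightarrow\cdots\Rightarrow$(vi) without ever using the hypothesis. The whole content of the proposition therefore lies in closing the cycle, i.e.\ in proving the single reverse implication $\rho_{AB}\in\cond\Rightarrow\rho_{AB}\in\sep$ under the standing assumption $\rho_{AC}\in\ppt$. My plan is to isolate this reverse implication and prove it by exploiting the pure-state structure.

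The first concrete step is to pass to the reduced spectra through the purity identities $S(\rho_{AB})=S(\rho_C)$, $S(\rho_{AC})=S(\rho_B)$, $S(\rho_{BC})=S(\rho_A)$, together with the rank analogues $\rank\rho_{AB}=\rank\rho_C$, $\rank\rho_{AC}=\rank\rho_B$, $\rank\rho_{BC}=\rank\rho_A$. Since $\ppt\subseteq\cond$ by \eqref{eq:hierarchy1}, the hypothesis $\rho_{AC}\in\ppt$ forces $\rho_{AC}\in\cond$, which translates into $S(\rho_B)\ge S(\rho_A)$ and $S(\rho_B)\ge S(\rho_C)$; meanwhile $\rho_{AB}\in\cond$ gives $S(\rho_C)\ge S(\rho_A)$ and $S(\rho_C)\ge S(\rho_B)$. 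Combining the two sandwich inequalities yields the rigid constraint $S(\rho_B)=S(\rho_C)\ge S(\rho_A)$, which is the structural consequence I intend to drive the argument with.

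The crux is then to promote this entropic rigidity, together with the PPT structure of $\rho_{AC}$, into a genuine separability statement for $\rho_{AB}$. I would argue by contraposition: assuming $\rho_{AB}\notin\sep$, the goal becomes to show that an entangled $\rho_{AB}$ sitting atop a PPT marginal $\rho_{AC}$ must strictly lower the conditional entropy, i.e.\ $S(\rho_{AB})<S(\rho_{AC})$, equivalently $S(\rho_C)<S(\rho_B)$ --- which directly contradicts the equality $S(\rho_B)=S(\rho_C)$ derived above and hence forces $\rho_{AB}\in\sep$. To establish this strict ``entropic cost of entanglement'' I expect to need either a rank/range-type separability criterion for $\rho_{AB}$ triggered by the PPT hypothesis on $\rho_{AC}$ (using the rank dictionary to bound $\rank\rho_{AB}=\rank\rho_C$ relative to $\rank\rho_A$ and $\rank\rho_B$, at which low-rank regime PPT states are known to be separable), or a correlation identity of Koashi--Winter type linking the entanglement of $\rho_{AB}$ to the classical correlation in $\rho_{AC}$.

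The main obstacle is precisely this last bridge. The conditional-entropy membership $\cond$ is \emph{a priori} far weaker than $\ppt$ or $\sep$, so the real difficulty is not the bookkeeping of Step~two but arguing that, conditioned on $\rho_{AC}\in\ppt$, mere nonnegativity of both conditional entropies of $\rho_{AB}$ already collapses it all the way down to separability; ruling out an entangled $\rho_{AB}$ with $S(\rho_{AB})\ge\max\{S(\rho_A),S(\rho_B)\}$ is where all the work resides.
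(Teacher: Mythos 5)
Your reduction is sound: the forward chain is free from Eq.~(\ref{eq:hierarchy1}), the only thing to prove is $\rho_{AB}\in\cond\Rightarrow\rho_{AB}\in\sep$, and your entropy bookkeeping via the purity identities correctly yields the rigidity $S(\rho_B)=S(\rho_C)$, equivalently $S(\rho_{AC})=S(\rho_C)$ and $S(\rho_{AB})=S(\rho_B)$. This is exactly the starting point of the argument the paper uses (for the stronger Theorem~\ref{thm:full_extension}, which subsumes this proposition). But you stop precisely where the proof begins in earnest, and you say so yourself: the step from entropic rigidity to separability is left as a wish list (``either a rank/range-type separability criterion \dots\ or a correlation identity of Koashi--Winter type''), with the admission that ``this is where all the work resides.'' That is a genuine gap, not a proof.

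The missing bridge is a two-stage argument, and your proposed contrapositive skips its first stage. One cannot directly invoke an ``entropic cost of entanglement'' for $\rho_{AB}$ relative to a merely \emph{PPT} marginal $\rho_{AC}$: the relevant lemma (Proposition~\ref{prop:CH}, due to Hayashi and Chen) states that $\rho_{AB}\in\sep\Leftrightarrow S(\rho_{AB})=S(\rho_B)$ only under the hypothesis $\rho_{AC}\in\sep$. So one must first upgrade $\rho_{AC}$ from $\ppt$ to $\sep$. The paper does this as follows: since $\ppt\subset\maj$, one has $\rho_C\succ\rho_{AC}$, and combined with $S(\rho_{AC})=S(\rho_C)$ and the \emph{strict} Schur concavity of the von Neumann entropy, $\rho_{AC}$ and $\rho_C$ are isospectral, hence $\rank(\rho_{AC})=\rank(\rho_C)$. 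Since $\ppt\subset\ut$ implies $\rank(\rho_{AC})\ge\max\{\rank(\rho_A),\rank(\rho_C)\}$, equality holds, and Proposition~\ref{prop::undsep} (the rank criterion for undistillable states) gives $\rho_{AC}\in\sep$. Only then does Proposition~\ref{prop:CH} apply, and together with your already-established equality $S(\rho_{AB})=S(\rho_B)$ it yields $\rho_{AB}\in\sep$. Neither the isospectrality step (majorization plus strict Schur concavity), nor the rank-based separability of $\rho_{AC}$, nor the correct hypothesis of the Hayashi--Chen entropy characterization appears in your proposal, so the central implication remains unproved.
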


This proposition implies that when $\rho_{AC}$ belongs to $\ppt$, 
if $\rho_{AB}$ is entangled, it must exhibit \emph{strong} entanglement, that is, 
it cannot belong to any class within the hierarchical implication chain in Eq.~(\ref{eq:hierarchy1}).  
This result provides the first formal realization of the CMoE principle, namely,  
if one side is weakly entangled, the other must be strongly entangled.  
In the next subsection, we review an alternative formulation of the CMoE.

\subsection{Converse monogamy of entanglement based on distillability}

The study by Singh and Datta~\cite{singh2023fully} did not explicitly employ the term, the CMoE, yet their results can be interpreted as a distinct form of it.  
Their approach emphasizes distillability rather than separability, 
offering a complementary perspective to that of Hayashi and Chen, as seen in the following two propositions.

\begin{Prop}\label{prop:SD1}
Let $\ket{\psi}_{ABC}$ be a tripartite pure state with the reduced state $\rho_{AC} \in \uo$. 
Then, for the reduced state $\rho_{AB}$, the following are equivalent:
\begin{enumerate}
    \item $\rho_{AB} \in \sep$,
    \item $\rho_{AB} \in \ppt$,
    \item $\rho_{AB} \in \ut$.
\end{enumerate}
\end{Prop}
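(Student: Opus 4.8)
The implications $(i)\Rightarrow(ii)\Rightarrow(iii)$ are immediate from the general hierarchy in Eq.~(\ref{eq:hierarchy2}) and use no assumption on $\rho_{AC}$, so the entire content is the single implication $(iii)\Rightarrow(i)$. Writing $\ket{\psi}_{ABC}$ for the purification, the plan is to first record the standard tripartite identities $S(\rho_{AB})=S(\rho_C)$, $S(\rho_{AC})=S(\rho_B)$, $S(\rho_{BC})=S(\rho_A)$, and then extract the entropic shadows of the two hypotheses. Since the hashing bound gives $D_{\rightarrow}\ge I_c(A\rangle B)=S(\rho_B)-S(\rho_{AB})$, the assumption $\rho_{AB}\in\ut$ (in particular $A\to B$ undistillable) forces $I_c(A\rangle B)\le 0$, i.e.\ $S(\rho_C)\ge S(\rho_B)$; applying the same bound to the $A\to C$ direction of $\rho_{AC}\in\uo$ gives $S(\rho_B)\ge S(\rho_C)$. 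These pin the spectra to $S(\rho_{AB})=S(\rho_{AC})=S(\rho_B)=S(\rho_C)$, equivalently $S(A|B)_{\rho_{AB}}=0$ and $S(A|C)_{\rho_{AC}}=0$. I would stress at once, however, that these equalities are only a consistency check and cannot be the end of the argument: vanishing conditional entropy is compatible with bound entanglement, so the purely entropic route stalls here and the operational strength of undistillability must enter.

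To convert undistillability into separability I would use the Koashi--Winter identity $E_f(\rho_{AB})+C_{\leftarrow}(\rho_{CA})=S(\rho_A)$, where $C_{\leftarrow}(\rho_{CA})$ is the classical correlation of $\rho_{CA}$ optimized over measurements on $C$. Because $E_f(\rho_{AB})=0$ exactly when $\rho_{AB}\in\sep$, the target $(iii)\Rightarrow(i)$ is equivalent to showing $C_{\leftarrow}(\rho_{CA})=S(\rho_A)$, i.e.\ that some measurement on $C$ collapses $A$ to a pure-state ensemble; by the Gisin--Hughston--Jozsa--Wootters theorem such a $C$-measurement then steers the pure $\ket{\psi}_{ABC}$ into product states $\ket{a_k}_A\otimes\ket{b_k}_B$ on $AB$, which is precisely a separable decomposition. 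Phrased contrapositively, this is the dichotomy I actually intend to establish: for a tripartite pure state, if $\rho_{AB}$ is entangled then either $\rho_{AB}$ is distillable or $\rho_{AC}$ is $A\to C$ distillable. Granting the dichotomy, the hypothesis $\rho_{AC}\in\uo$ excludes the second branch and forces $\rho_{AB}\notin\ut$, contradicting $(iii)$ and thereby yielding $(iii)\Rightarrow(i)$.

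Proving this dichotomy is the main obstacle. Starting from an entangled $\rho_{AB}$, Koashi--Winter guarantees that $C_{\leftarrow}(\rho_{CA})<S(\rho_A)$, so no measurement on $C$ removes all of $A$'s conditional entropy and a residual coherence between $A$ and $B$ survives; the task is to funnel this residual coherence into free entanglement across one of the two cuts. The hard part is that the relevant single-letter quantities are exactly at the boundary — the hypotheses push $I_c(A\rangle B)\le 0$ and $I_c(A\rangle C)\le 0$ — so the bare hashing bound is vacuous and one cannot distill by coherent information alone. The construction must therefore genuinely exploit Alice's pre-processing freedom: she measures $A$ and forwards the outcome, steering the $BC$-marginal so that the conditional $AC$-ensemble acquires strictly positive distillable entanglement (for instance by violating a one-way distillability criterion such as the reduction criterion, which lies below $\uo$ in the hierarchy). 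Controlling this pre-processed, regularized coherent information, and verifying that the residual $A$--$B$ coherence \emph{always} suffices to trigger distillation on at least one cut whenever $\rho_{AB}$ is entangled, is the delicate step on which the whole proposition turns.
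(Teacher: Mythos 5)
Your reduction of the problem to the single implication (iii)$\Rightarrow$(i) and your entropic bookkeeping are both correct: the hashing bound applied to $\rho_{AB}\in\ut$ gives $S(\rho_{AB})\ge S(\rho_B)$, and applied to $\rho_{AC}\in\uo$ gives $S(\rho_{AC})\ge S(\rho_C)$, which via the purification identities $S(\rho_{AB})=S(\rho_C)$, $S(\rho_{AC})=S(\rho_B)$ pins down $S(\rho_{AB})=S(\rho_B)$. But you then abandon this track, asserting that vanishing conditional entropy is ``compatible with bound entanglement'' and that an operational dichotomy must be proved via Koashi--Winter and steering. That dichotomy is never established --- you say yourself it is ``the main obstacle'' and ``the delicate step on which the whole proposition turns'' --- so the proof is incomplete. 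Worse, the detour is unnecessary: your dismissal of the entropic route is premature, because you are not using all the structure that undistillability provides.

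The missing ingredients are exactly the two facts the paper leans on when it proves the stronger Theorem~\ref{thm:full_equivalence} (note the paper itself only cites Proposition~\ref{prop:SD1} from Singh--Datta, but its own proof of the generalization shows how to finish). First, $\ut\subset\maj$, so $\rho_{AB}\in\ut$ gives $\rho_B\succ\rho_{AB}$ and $\rho_A\succ\rho_{AB}$; combining $\rho_B\succ\rho_{AB}$ with the equality $S(\rho_{AB})=S(\rho_B)$ and the \emph{strict} Schur concavity of the von Neumann entropy forces $\rho_{AB}$ and $\rho_B$ to be isospectral, hence $\rank(\rho_{AB})=\rank(\rho_B)$, while $\rho_A\succ\rho_{AB}$ gives $\rank(\rho_A)\le\rank(\rho_{AB})$, so $\rank(\rho_{AB})=\max\{\rank(\rho_A),\rank(\rho_B)\}$. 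Second, Proposition~\ref{prop::undsep} states that an undistillable state saturating this rank condition is separable, which closes (iii)$\Rightarrow$(i) immediately. In other words, for undistillable states the equality $S(A|B)_{\rho_{AB}}=0$ \emph{does} imply separability once majorization and the rank criterion are invoked, so the ``dichotomy'' you set out to prove is a corollary of the proposition, not a lemma for it. Your Koashi--Winter/steering route, besides being unproven, points the argument in the wrong direction.
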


\begin{Prop}\label{prop:SD2}
Let $\ket{\psi}_{ABC}$ be a tripartite pure state with the reduced state $\rho_{AC} \in \ut$. 
Then, for the reduced state $\rho_{AB}$, the following are equivalent:
\begin{enumerate}
    \item $\rho_{AB} \in \sep$,
    \item $\rho_{AB} \in \ppt$,
    \item $\rho_{AB} \in \ut$,
    \item $\rho_{AB} \in \uo$.
\end{enumerate}
\end{Prop}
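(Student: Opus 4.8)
The forward chain $(i)\Rightarrow(ii)\Rightarrow(iii)\Rightarrow(iv)$ is exactly the implication chain recorded in Eq.~(\ref{eq:hierarchy2}): the first two links are the standard inclusions $\sep\subseteq\ppt\subseteq\ut$, and $(iii)\Rightarrow(iv)$, i.e.\ $\ut\subseteq\uo$, holds because one-way LOCC is a restriction of general LOCC, so a two-way undistillable state is \emph{a fortiori} one-way undistillable. Hence only a single return implication is needed to close the cycle.

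Because $\rho_{AC}\in\ut\subseteq\uo$, the hypothesis of Proposition~\ref{prop:SD1} is met, and that proposition already delivers $(i)\Leftrightarrow(ii)\Leftrightarrow(iii)$. The entire statement therefore reduces to the single new implication $(iv)\Rightarrow(iii)$: one-way undistillability of $\rho_{AB}$ must upgrade to two-way undistillability once $\rho_{AC}\in\ut$ is assumed. This is the only step that uses the strictly stronger hypothesis $\rho_{AC}\in\ut$ in place of the weaker $\rho_{AC}\in\uo$ of Proposition~\ref{prop:SD1}, and it is precisely what separates Proposition~\ref{prop:SD2} from Proposition~\ref{prop:SD1}.

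I would prove $(iv)\Rightarrow(iii)$ by contraposition: assume $\rho_{AB}$ is two-way distillable and show that either $\rho_{AB}$ is already $A\to B$ distillable (contradicting $(iv)$) or $\rho_{AC}$ is distillable (contradicting $\rho_{AC}\in\ut$). The organizing identity is the entropy balance of any tripartite pure state, $S(\rho_{AB})=S(\rho_C)$ and $S(\rho_{AC})=S(\rho_B)$, which yields the exact duality $I(A\rangle B)_{\rho_{AB}}=-I(A\rangle C)_{\rho_{AC}}$ between the $A\to B$ coherent information of $\rho_{AB}$ and the $A\to C$ coherent information of $\rho_{AC}$. Combined with the Devetak--Winter hashing bound $D_{\rightarrow}\ge I(A\rangle B)$, this duality converts an obstruction to $A\to B$ distillation of $\rho_{AB}$ into a surplus on the $A$--$C$ side: when $\rho_{AB}\in\uo$, the single-copy balance already places $\rho_{AB}$ in $\condrr$, in agreement with Eq.~(\ref{eq:hierarchy2}), and the remaining task is to convert the standing assumption that $\rho_{AB}$ is nonetheless two-way distillable into genuine distillability of $\rho_{AC}$.

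The main obstacle is exactly this conversion. One-way undistillability is not equivalent to nonpositivity of the single-copy coherent information, and two-way distillability is in general strictly stronger than the union of the two one-way distillabilities, so the duality cannot be applied naively at the level of one copy nor by splitting the protocol into independent one-way pieces. The substantive step is to lift the entropic balance to the regularized, locally pre-processed setting and to exploit the full strength of $\rho_{AC}\in\ut$ --- namely that $\rho_{AC}^{\otimes n}$ remains undistillable under every local filtering and all $n$ --- in order to exclude the knife-edge case $I(A\rangle B)_{\rho_{AB}}=0$ from surviving pre-processing. Once this lemma is secured, the assumed two-way distillability of $\rho_{AB}$ contradicts $\rho_{AC}\in\ut$ unless $\rho_{AB}$ was already $A\to B$ distillable, which establishes $(iv)\Rightarrow(iii)$ and, together with Proposition~\ref{prop:SD1}, completes the equivalence.
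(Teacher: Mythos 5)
Your reduction of the statement to the single implication $(iv)\Rightarrow(iii)$ is sound, and the forward chain and the appeal to Proposition~\ref{prop:SD1} are fine. But the proof of $(iv)\Rightarrow(iii)$ is where the proposal stops being a proof: you list the obstacles yourself (one-way undistillability is not equivalent to nonpositive single-copy coherent information; two-way distillability does not decompose into one-way pieces) and then defer their resolution to an unproven lemma about lifting the entropic balance to the regularized, locally pre-processed setting and excluding the knife-edge case. That lemma is exactly the content you would need to supply, and nothing in the sketch indicates how to obtain it, so the contrapositive argument does not close.

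The gap is avoidable, because the single-copy entropy balance you correctly identified already suffices --- no regularization, no pre-processing, and no contraposition are needed. From $(iv)$ and the hashing bound, $\rho_{AB}\in\condrr$, i.e.\ $S(\rho_{AB})\ge S(\rho_B)$, which by purity of $\ket{\psi}_{ABC}$ reads $S(\rho_C)\ge S(\rho_{AC})$. The hypothesis $\rho_{AC}\in\ut$ also places $\rho_{AC}$ in $\maj$ and $\cond$ by the chain in Eq.~(\ref{eq:hierarchy1}), so $\rho_C\succ\rho_{AC}$ and $S(\rho_C)\le S(\rho_{AC})$. Hence $S(\rho_C)=S(\rho_{AC})$, and strict Schur concavity of the von Neumann entropy combined with the majorization forces $\rho_C$ and $\rho_{AC}$ to be isospectral, giving $\rank(\rho_{AC})=\rank(\rho_C)=\max\{\rank(\rho_A),\rank(\rho_C)\}$. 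Proposition~\ref{prop::undsep} then yields $\rho_{AC}\in\sep$, and since the equality $S(\rho_{AB})=S(\rho_B)$ now holds, Proposition~\ref{prop:CH} gives $\rho_{AB}\in\sep$. This establishes $(iv)\Rightarrow(i)$ directly and closes the whole cycle; it is precisely the argument used for the stronger Theorem~\ref{thm:full_extension}, of which this proposition is the special case obtained from $\uo\subset\condrr$.
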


Proposition~\ref{prop:SD1} implies that when $\rho_{AC}$ belongs to $\uo$, if $\rho_{AB}$ is entangled, it must be distillable.
In addition, Proposition~\ref{prop:SD2} tells us that under the assumption that $\rho_{AC} \in \ut$, if $\rho_{AB}$ is entangled, it is one-way distillable using only one-way LOCC.  
In summary, the hierarchy of $\rho_{AB}$ with respect to distillability collapses into a single class if $\rho_{AC}$ is (one-way) undistillable.

This result does not generalize the CMoE demonstrated by Hayashi and Chen; 
instead, it illustrates a partial collapse of the hierarchy under weaker assumptions.  
While belonging to $\ppt$ induces a complete collapse of all classes in the implication chains in Eq.~(\ref{eq:hierarchy1}), 
belonging to $\ut$ or $\uo$ produces only a restricted, distillability-oriented version of the CMoE.  
These observations set the stage for the next section, where we extend both results to their maximal forms.


\section{Maximal extensions of the Converse Monogamy of Entanglement}\label{sec:main}

In this section, we present two maximal extensions of the CMoE for a tripartite pure state $\ket{\psi}_{ABC}$. 
Our results unify and generalize those previously obtained by Hayashi and Chen~\cite{hayashi2011weaker} and by Singh and Datta~\cite{singh2023fully}. 
Before stating the main theorems, we recall several spectral relations that form the technical foundation of our proofs.

\subsection{Extension of the converse monogamy of entanglement}

The CMoE for a tripartite pure state can generally be understood through the complementary relations between its two bipartite reduced states.
Let $\rho_{AB} = \Tr_C(\ket{\psi}\!\bra{\psi}_{ABC})$ and $\rho_{AC} = \Tr_B(\ket{\psi}\!\bra{\psi}_{ABC})$.  
By the Schmidt decomposition, $\rho_{AC}$ and $\rho_B$ (or $\rho_{AB}$ and $\rho_C$) share the same nonzero spectrum.  
Therefore, comparing $\rho_{AB}$ and $\rho_{AC}$ is equivalent to comparing $\rho_{AB}$ and $\rho_B$, and this spectral correspondence constitutes the mathematical backbone of the CMoE structure.

We first recall several useful properties that hold under these spectral conditions.

\begin{Prop}[\cite{horodecki2000operational,chen2011distillability}]\label{prop::undsep}
Let $\rho_{AB} \in \ut$. If
\[
\rank(\rho_{AB}) = \max \{ \rank(\rho_A), \rank(\rho_B) \},
\] 
then $\rho_{AB} \in \sep$.
\end{Prop}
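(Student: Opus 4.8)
The plan is to climb the implication chain $(i)\Rightarrow(ii)\Rightarrow(iii)$ of Eq.~(\ref{eq:hierarchy2}) in reverse under the rank hypothesis, that is, to prove $\rho_{AB}\in\ut \Rightarrow \rho_{AB}\in\ppt \Rightarrow \rho_{AB}\in\sep$. Since the hypothesis is symmetric under exchange of $A$ and $B$, I may assume without loss of generality that $\rank(\rho_B)\ge\rank(\rho_A)$, so that it reads
\[
\rank(\rho_{AB})=\rank(\rho_B).
\]
This is exactly the boundary regime in which the rank of the global state saturates the rank of its larger marginal, and it is this tightness that I expect to force the separability hierarchy to collapse onto a single class.

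First I would show $\rho_{AB}\in\ppt$, arguing by contraposition. Suppose $\rho_{AB}^{\Gamma}\not\ge 0$. The rank condition makes $\mathrm{range}(\rho_{AB})$ low-dimensional and rigidly tied to $\mathrm{range}(\rho_B)$, which constrains how a negative-eigenvalue vector of $\rho_{AB}^{\Gamma}$ can be embedded in $\mathcal{H}_A\otimes\mathcal{H}_B$. Invoking the distillability analysis of low-rank states~\cite{chen2011distillability}, I would extract from this negativity a Schmidt-rank-two vector $\ket{\eta}$ on one or a few copies with $\bra{\eta}(\rho_{AB}^{\otimes n})^{\Gamma}\ket{\eta}<0$, which is precisely a distillation witness. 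Then $\rho_{AB}$ would be distillable, contradicting $\rho_{AB}\in\ut$; hence $\rho_{AB}\in\ppt$.

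With positivity of the partial transpose secured, the equality $\rank(\rho_{AB})=\rank(\rho_B)$ brings $\rho_{AB}$ into the scope of the operational separability criterion for low-rank PPT states~\cite{horodecki2000operational}: when the global rank equals the rank of a reduced state, $\mathrm{range}(\rho_{AB})$ is spanned by product vectors, and the PPT property lets one subtract product projections one at a time while preserving PPT and strictly lowering the rank, terminating in an explicit convex decomposition into product states. This yields $\rho_{AB}\in\sep$ and closes the chain.

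The main obstacle is the PPT step. Undistillability is an asymptotic, many-copy property, whereas the rank hypothesis is a single-copy statement, so the difficulty is to bridge the two: either by producing a single-copy Schmidt-rank-two witness directly or by tracking how $\mathrm{range}(\rho_{AB})$, $\ker(\rho_{AB})$, and partial transposition behave under tensor powers. This is exactly where the low-rank structural results carry the argument. By comparison, the concluding $\ppt\Rightarrow\sep$ step is routine once the product-vector spanning of the range has been established.
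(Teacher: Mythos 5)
Your proposal is correct and reconstructs exactly the argument the paper relies on: Proposition~\ref{prop::undsep} is stated with citations rather than proved, and the intended derivation is precisely your two-step chain --- the low-rank distillability result of Chen and Djokovi\'{c}~\cite{chen2011distillability} shows that an NPT state satisfying the rank condition is distillable (so undistillability forces $\rho_{AB}\in\ppt$), and the operational criterion of Horodecki \emph{et al.}~\cite{horodecki2000operational} shows that a PPT state satisfying the rank condition is separable. Since this matches the paper's (implicit) route, no further comparison is needed.
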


\begin{Prop}[\cite{hayashi2011weaker}]\label{prop:CH}
Let $\ket{\psi}_{ABC}$ be a pure state and suppose that its reduced state $\rho_{AC}$ belongs to $\sep$. Then
\[
\rho_{AB} \in \sep 
\quad \Longleftrightarrow \quad 
S(\rho_{AB}) = S(\rho_B).
\]
\end{Prop}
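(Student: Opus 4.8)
The plan is to reduce everything to a spectral saturation condition internal to $\rho_{AC}$, using the Schmidt relations for the pure state $\ket{\psi}_{ABC}$, namely $S(\rho_{AB}) = S(\rho_C)$ and $S(\rho_{AC}) = S(\rho_B)$. These turn the target condition $S(\rho_{AB}) = S(\rho_B)$ into the equivalent statement $S(\rho_{AC}) = S(\rho_C)$, i.e.\ a saturation of the inequality $S(\rho_{AC}) \ge S(\rho_C)$ that the separable state $\rho_{AC}$ already obeys.

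For the forward implication I would argue by sandwiching. If $\rho_{AB} \in \sep$, then $\sep \subseteq \maj$ (the chain in Eq.~\eqref{eq:hierarchy1}) gives $\rho_B \succ \rho_{AB}$ and hence $S(\rho_{AB}) \ge S(\rho_B)$. Applying the same chain to the hypothesis $\rho_{AC} \in \sep$ yields $\rho_C \succ \rho_{AC}$, so $S(\rho_{AC}) \ge S(\rho_C)$; rewriting through the Schmidt relations this reads $S(\rho_B) \ge S(\rho_{AB})$. The two inequalities force the equality $S(\rho_{AB}) = S(\rho_B)$.

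The reverse implication is the substantive direction. I would first fix a pure-product decomposition $\rho_{AC} = \sum_k p_k \ket{\alpha_k}\!\bra{\alpha_k}_A \otimes \ket{\gamma_k}\!\bra{\gamma_k}_C$ and, by uniqueness of purification up to an isometry on the purifying system, write $\ket{\psi}_{ABC} = \sum_k \sqrt{p_k}\,\ket{\alpha_k}_A \ket{w_k}_B \ket{\gamma_k}_C$ with $\{\ket{w_k}\}$ orthonormal, so that $\rho_C = \sum_k p_k \ket{\gamma_k}\!\bra{\gamma_k}$. From $\sep \subseteq \maj$ we again have $\rho_C \succ \rho_{AC}$, and now the hypothesis $S(\rho_{AC}) = S(\rho_C)$ combined with the \emph{strict} Schur-concavity of the von Neumann entropy forces $\rho_{AC}$ and $\rho_C$ to have identical spectra. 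In particular their purities agree, $\Tr(\rho_{AC}^2) = \Tr(\rho_C^2)$, which expands into
\begin{equation}
\sum_{k,l} p_k p_l \,\lvert\braket{\gamma_k|\gamma_l}\rvert^2 \bigl(1 - \lvert\braket{\alpha_k|\alpha_l}\rvert^2\bigr) = 0 .
\end{equation}
Since every summand is nonnegative, each must vanish, so $\braket{\gamma_k|\gamma_l} \neq 0$ forces $\lvert\braket{\alpha_k|\alpha_l}\rvert = 1$, i.e.\ $\ket{\alpha_k}$ and $\ket{\alpha_l}$ are parallel.

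To finish I would partition the indices into classes according to the direction of $\ket{\alpha_k}$ (absorbing phases into the $\ket{w_k}$, which stay orthonormal). The condition above says indices in distinct classes have orthogonal $\ket{\gamma}$-components, so in $\rho_{AB} = \sum_{k,l}\sqrt{p_k p_l}\braket{\gamma_l|\gamma_k}\ket{\alpha_k}\!\bra{\alpha_l}\otimes\ket{w_k}\!\bra{w_l}$ all cross-class terms drop and within a class $\ket{\alpha_k}\!\bra{\alpha_l} = \ket{\alpha^{(c)}}\!\bra{\alpha^{(c)}}$. Hence $\rho_{AB} = \sum_c \ket{\alpha^{(c)}}\!\bra{\alpha^{(c)}}_A \otimes \sigma^{(c)}_B$ with $\sigma^{(c)}_B = \Tr_C\ket{\Omega_c}\!\bra{\Omega_c} \ge 0$ for $\ket{\Omega_c} = \sum_{k\in c}\sqrt{p_k}\,\ket{w_k}_B\ket{\gamma_k}_C$, which is manifestly separable. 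I expect the main obstacle to be the middle step, converting the entropic saturation into the pointwise parallelism of the $\ket{\alpha_k}$: the crucial realizations are that majorization together with equal entropy yields equal \emph{spectra} (not merely equal entropy), and that equal purity then linearizes into a sum of nonnegative terms whose vanishing pins down the geometry.
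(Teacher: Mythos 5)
The paper does not prove this proposition; it is imported verbatim from Hayashi and Chen, so there is no in-paper argument to compare against. Judged on its own, your proof is correct and complete. The translation of $S(\rho_{AB})=S(\rho_B)$ into $S(\rho_{AC})=S(\rho_C)$ via the Schmidt relations is exactly the right move, and the forward direction is a clean sandwich using $\sep\subset\cond$ on both marginals. In the reverse direction, the two load-bearing steps both hold up: (a) $\rho_C\succ\rho_{AC}$ together with equal entropies and \emph{strict} Schur concavity forces isospectrality (the same device the paper itself uses in Theorems~\ref{thm:full_extension} and~\ref{thm:full_equivalence}), and (b) the purity identity
$\Tr(\rho_C^2)-\Tr(\rho_{AC}^2)=\sum_{k,l}p_kp_l\,\lvert\braket{\gamma_k|\gamma_l}\rvert^2\bigl(1-\lvert\braket{\alpha_k|\alpha_l}\rvert^2\bigr)$
is a sum of nonnegative terms, so its vanishing pins down the dichotomy ``parallel $\alpha$'s or orthogonal $\gamma$'s.'' Parallelism is an equivalence relation, so the class decomposition is legitimate, and the cross-class terms of $\rho_{AB}$ indeed vanish, leaving the manifestly separable form $\sum_c\ket{\alpha^{(c)}}\!\bra{\alpha^{(c)}}\otimes\sigma^{(c)}_B$. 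The only point worth stating explicitly is the one you use implicitly: the constructed purification differs from the given $\ket{\psi}_{ABC}$ by an isometry acting on $B$ alone, which preserves separability across the $A|B$ cut, so working with your explicit purification is without loss of generality. This yields, as a byproduct, the structural characterization that a separable $\rho_{AC}$ saturating $S(\rho_{AC})=S(\rho_C)$ is an orthogonal (on $C$) mixture of pure states on $A$ --- which is essentially the content of the Hayashi--Chen lemma being cited.
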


Building on these facts, we now present the extended forms of the CMoE for tripartite pure states.  
The following theorems encompass both the Hayashi–Chen and Singh–Datta results, and moreover demonstrate that the CMoE remains valid under weaker assumptions.

\begin{Thm}\label{thm:full_extension}
Let $\ket{\psi}_{ABC}$ be a tripartite pure state with $\rho_{AC} \in \ut$.  
Then, for the reduced state $\rho_{AB}$, the following statements are equivalent:
\begin{enumerate}
    \item $\rho_{AB} \in \sep$,
    \item $\rho_{AB} \in \ppt$,
    \item $\rho_{AB} \in \ut$, 
    \item $\rho_{AB} \in \uo$,   
    \item $\rho_{AB} \in \red$,    
    \item $\rho_{AB} \in \maj$,
    \item $\rho_{AB} \in \cond$,
    \item $\rho_{AB} \in \condrr$.
\end{enumerate}
\end{Thm}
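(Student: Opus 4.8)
The plan is to prove the eight conditions equivalent by closing a single cycle, since almost all arrows come for free. Eqs.~(\ref{eq:hierarchy1}) and~(\ref{eq:hierarchy2}) give $(1)\Rightarrow(2)\Rightarrow(3)$, then $(3)\Rightarrow(4)\Rightarrow(8)$ and $(3)\Rightarrow(5)\Rightarrow(6)\Rightarrow(7)$, while $\cond\subseteq\condrr$ yields $(7)\Rightarrow(8)$. Thus every statement implies the weakest one, $(8)$, and it suffices to establish the single back-edge $(8)\Rightarrow(1)$, i.e. that $\rho_{AB}\in\condrr$ together with $\rho_{AC}\in\ut$ forces $\rho_{AB}\in\sep$.

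First I would convert the entropy hypothesis into an equality. Since $\rho_{AC}\in\ut$ lies in $\cond$ by Eq.~(\ref{eq:hierarchy1}), we have $S(\rho_{AC})\ge S(\rho_C)$; and since $(8)$ reads $S(\rho_{AB})\ge S(\rho_B)$, the Schmidt identities $S(\rho_{AB})=S(\rho_C)$ and $S(\rho_B)=S(\rho_{AC})$ rewrite it as $S(\rho_C)\ge S(\rho_{AC})$. The two inequalities sandwich to $S(\rho_{AC})=S(\rho_C)$, equivalently $S(\rho_{AB})=S(\rho_B)$.

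The crux is to upgrade this entropy equality to a rank equality. Because $\rho_{AC}\in\ut\subseteq\maj$, we have $\rho_C\succ\rho_{AC}$ and $\rho_A\succ\rho_{AC}$. Majorization forces the majorant to have no larger support, so $\rank\rho_A\le\rank\rho_{AC}$ and $\rank\rho_C\le\rank\rho_{AC}$; moreover $\rho_C\succ\rho_{AC}$ gives $S(\rho_C)\le S(\rho_{AC})$ with equality precisely when the two spectra coincide. Invoking the strict Schur-concavity of the von Neumann entropy together with the equality just derived, the spectra of $\rho_C$ and $\rho_{AC}$ agree, hence $\rank\rho_{AC}=\rank\rho_C$. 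Combined with $\rank\rho_A\le\rank\rho_{AC}$ this gives the rank condition $\rank\rho_{AC}=\max\{\rank\rho_A,\rank\rho_C\}$.

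Finally I would apply the stated foundations in sequence: Proposition~\ref{prop::undsep} with $\rho_{AC}\in\ut$ and the rank condition gives $\rho_{AC}\in\sep$; then Proposition~\ref{prop:CH}, now applicable since $\rho_{AC}\in\sep$, combined with $S(\rho_{AB})=S(\rho_B)$, yields $\rho_{AB}\in\sep$. This closes $(8)\Rightarrow(1)$ and hence the full equivalence. The main obstacle--and the step doing the real work--is the bootstrapping in the third paragraph: promoting a single entropy equality into a rank identity via strict Schur-concavity, which is exactly what lifts the hypothesis $\rho_{AC}\in\ut$ up to $\rho_{AC}\in\sep$ so that the Hayashi--Chen criterion can be invoked. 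Care will be needed to track the Schmidt spectral correspondences and to handle the degenerate-spectrum bookkeeping in the majorization-to-rank implication.
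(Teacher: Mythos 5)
Your proposal is correct and follows essentially the same route as the paper: reduce to the single implication $(8)\Rightarrow(1)$, sandwich $S(\rho_C)$ and $S(\rho_{AC})$ to equality via the Schmidt correspondence, upgrade to isospectrality by strict Schur concavity, obtain the rank condition, and then apply Proposition~\ref{prop::undsep} followed by Proposition~\ref{prop:CH}. The only cosmetic difference is that you derive $\rank(\rho_A)\le\rank(\rho_{AC})$ from the majorization relation, whereas the paper cites the rank criterion for undistillable states directly; both yield the same conclusion.
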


\begin{proof}
It suffices to show that $\rho_{AB} \in \sep$ if it belongs to $\condrr$. 
Thus, let us assume that $\rho_{AB} \in\condrr$. Then
since $\rho_{AC} \in \ut$, we have $\rank(\rho_{AC}) \ge \max \{ \rank(\rho_A), \rank(\rho_C) \}$~\cite{horodecki2003rank}, and since $\rho_{AC}$ is also contained in both $\maj$ and $\cond $, we obtain $\rho_C \succ \rho_{AC}$ and $S(\rho_C) \le S(\rho_{AC})$. 
In addition, from the definition of $\rho_{AB} \in \condrr$, we get $S(\rho_{AB}) \ge S(\rho_B)$, which implies $S(\rho_C) \ge S(\rho_{AC})$.  
Hence $S(\rho_C) = S(\rho_{AC})$, and by the strict Schur concavity of the von Neumann entropy~\cite{marshall1979inequalities,hanson2018strict}, $\rho_C$ and $\rho_{AC}$ are isospectral.  
Therefore,
\[
\rank(\rho_{AC}) = \rank(\rho_{C}) = \max \{ \rank(\rho_A), \rank(\rho_C) \}.
\]
By Proposition~\ref{prop::undsep}, we have $\rho_{AC} \in \sep$, and thus, by Proposition~\ref{prop:CH}, $\rho_{AB} \in \sep$.
\end{proof}

\begin{Thm}\label{thm:full_equivalence}
Let $\ket{\psi}_{ABC}$ be a tripartite pure state with $\rho_{AC} \in \condrr$.  
Then, for the reduced state $\rho_{AB}$, the following statements are equivalent:
\begin{enumerate}
    \item $\rho_{AB} \in \sep$,
    \item $\rho_{AB} \in \ppt$,
    \item $\rho_{AB} \in \ut$.
\end{enumerate}
\end{Thm}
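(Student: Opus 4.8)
The three classes are already linked by the general hierarchy: the implications $(i)\Rightarrow(ii)\Rightarrow(iii)$ are the initial segment of the chain in Eq.~(\ref{eq:hierarchy2}) and hold for any bipartite state with no hypothesis on $\rho_{AC}$. Hence the entire content of the theorem reduces to closing the loop, namely proving $(iii)\Rightarrow(i)$: if $\rho_{AB}\in\ut$ then $\rho_{AB}\in\sep$. The plan is to run essentially the same spectral argument used for Theorem~\ref{thm:full_extension}, but now feeding the undistillability machinery through $\rho_{AB}$ itself, while using the weaker hypothesis $\rho_{AC}\in\condrr$ only as the entropy inequality that upgrades to an equality.

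Assume therefore $\rho_{AB}\in\ut$. First I would harvest everything that undistillability of $\rho_{AB}$ supplies through the chain in Eq.~(\ref{eq:hierarchy1}): membership in $\maj$ gives $\rho_B\succ\rho_{AB}$, membership in $\cond$ gives $S(\rho_{AB})\ge S(\rho_B)$, and the rank bound of Ref.~\cite{horodecki2003rank} gives $\rank(\rho_{AB})\ge\max\{\rank(\rho_A),\rank(\rho_B)\}$. Next I would invoke the Schmidt correspondence for the pure state $\ket{\psi}_{ABC}$, under which $\rho_{AB}$ and $\rho_C$ are isospectral and $\rho_{AC}$ and $\rho_B$ are isospectral, so that $S(\rho_{AB})=S(\rho_C)$ and $S(\rho_{AC})=S(\rho_B)$. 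The hypothesis $\rho_{AC}\in\condrr$ reads $S(\rho_{AC})\ge S(\rho_C)$, which after this translation becomes $S(\rho_B)\ge S(\rho_{AB})$. Combining it with the opposite inequality $S(\rho_{AB})\ge S(\rho_B)$ obtained above forces $S(\rho_{AB})=S(\rho_B)$.

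The equality of entropies together with the majorization $\rho_B\succ\rho_{AB}$ is exactly the configuration in which the strict Schur concavity of the von Neumann entropy~\cite{marshall1979inequalities} applies, so $\rho_B$ and $\rho_{AB}$ must be isospectral and in particular $\rank(\rho_{AB})=\rank(\rho_B)$. Inserting this into the rank bound yields $\rank(\rho_{AB})\ge\max\{\rank(\rho_A),\rank(\rho_B)\}\ge\rank(\rho_B)=\rank(\rho_{AB})$, whence $\rank(\rho_{AB})=\max\{\rank(\rho_A),\rank(\rho_B)\}$. Proposition~\ref{prop::undsep} then delivers $\rho_{AB}\in\sep$, completing the cycle.

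The step I expect to require the most care is the bookkeeping of the two Schmidt correspondences: the hypothesis is stated for $\rho_{AC}$, but the undistillability, majorization, and rank facts all attach to $\rho_{AB}$, so one must correctly route $S(\rho_{AC})\ge S(\rho_C)$ into the comparison $S(\rho_B)\ge S(\rho_{AB})$ and verify that the direction of the majorization $\rho_B\succ\rho_{AB}$ is the one compatible with the entropy equality, so that strict Schur concavity genuinely yields isospectrality rather than a mere inequality. Once this routing is set up correctly, every remaining step is a direct citation of Proposition~\ref{prop::undsep}, the rank bound of Ref.~\cite{horodecki2003rank}, and Schur concavity.
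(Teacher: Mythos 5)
Your proof is correct and follows essentially the same route as the paper's: both reduce to showing $(iii)\Rightarrow(i)$, derive $S(\rho_{AB})=S(\rho_B)$ by combining $\rho_{AB}\in\ut\subset\condrr$ with the hypothesis $\rho_{AC}\in\condrr$ via the Schmidt correspondence, invoke strict Schur concavity together with $\rho_B\succ\rho_{AB}$ to get isospectrality and hence $\rank(\rho_{AB})=\rank(\rho_B)$, and finish with Proposition~\ref{prop::undsep}. The only cosmetic difference is how you pin down $\rank(\rho_{AB})=\max\{\rank(\rho_A),\rank(\rho_B)\}$ (the undistillability rank bound of Ref.~\cite{horodecki2003rank} versus the paper's use of $\rho_A\succ\rho_{AB}$), and both are valid.
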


\begin{proof}
It suffices to show that $\rho_{AB} \in \sep$ if $\rho_{AB} \in \ut$.  
Assume that $\rho_{AB} \in \ut$. Then  since $\ut\subset\condrr$, we have 
$S(\rho_B) \le S(\rho_{AB})$.  
Combining this with the definition of $\rho_{AC} \in \condrr$, $S(\rho_C) \le S(\rho_{AC})$, and thus we obtain $S(\rho_B) = S(\rho_{AB})$, which means that $\rho_B$ and $\rho_{AB}$ are isospectral since $\rho_{AB}\in \ut \subset \maj$ and the von Neumann entropy is strictly Schur concave, 
as in the proof of Theorem~\ref{thm:full_extension}. 
Hence we obtain $\rank(\rho_{AB})=\rank(\rho_B)$. 
In addition, since $\rho_A \succ \rho_{AB}$, it follows that $\rank(\rho_A)\le\rank(\rho_{AB})$~\cite{watrous11}.
Therefore, by Proposition~\ref{prop::undsep}, $\rho_{AB}$ must be separable.
\end{proof}

Theorem~\ref{thm:full_extension} and Theorem~\ref{thm:full_equivalence} directly imply the following corollary.
\begin{Cor}\label{cor:summary}
Let $|\psi\rangle_{ABC}$ be a pure state such that both $\rho_{AC}$ and $\rho_{BC}$ are entangled. Then
\begin{enumerate}[label=\textnormal{(\roman*)}]
\item If $\rho_{AB}\in\condrr$, then $\rho_{AC}$ and $\rho_{BC}$ are distillable.
\item If $\rho_{AB}\in\ut$, then $\rho_{AC}$ and $\rho_{BC}$ are not contained in $\condrr$.
\end{enumerate}
\end{Cor}

We now remark that since both $\cond$ and $\uo$ are included in $\condrr$ and $\ppt \subset \ut$, Theorem~\ref{thm:full_extension} and Theorem~\ref{thm:full_equivalence} can be regarded as  generalizations of both Proposition~\ref{thm:HC} and Proposition~\ref{prop:SD2}, and Proposition~\ref{prop:SD1}, respectively. 
Thus our two main theorems together imply that if one reduced bipartite state of a tripartite pure state does not exhibit a hierarchical collapse within the hierarchy, its complementary bipartite state must be distillable or one-way distillable due to the hashing bound, which means that the distillable entanglement of a bipartite state $\rho_{AB}$ is lower-bounded by $\max\{0,-S(A|B)_{\rho_{AB}}\}$, 
where $S(A|B)=S(AB)-S(B)$ is the quantum conditional entropy. 

Furthermore, even though $\rho_{AC}$ does not satisfy the assumption in Proposition~\ref{thm:HC}, Proposition~\ref{prop:SD1}, or Proposition~\ref{prop:SD2}, that is, $\rho_{AC}$ has a negative eigenvalue after applying the partial transpose, it is one-way distillable, or it is distillable, our main theorems ensure that $\ket{\psi}_{ABC}$ still satisfies the CMoE, under the weaker assumption in the theorems.
The following example explicitly illustrates this scenario.

\begin{Ex}\label{ex:cond_D1}
Consider the one-way distillable state introduced by Horodecki \textit{et al.}~\cite{horodecki2005locking}: 
let $\rho_{AC}$ be a state given by 
\[
\rho_{AC} = \rho_{A_1A_2C}
= \frac{1}{2}\ket{0}\bra{0}_{A_1}\otimes\ket{\psi^+}\bra{\psi^+}_{A_2C}
+ \frac{1}{2}\ket{1}\bra{1}_{A_1}\otimes\frac{I_{A_2C}}{d^2},
\]
where $A=A_1A_2$ and 
\begin{equation}
\ket{\psi^+}=\frac{1}{\sqrt{d}}\sum_{i=0}^{d-1}\ket{ii}.
\label{eq:psi}
\end{equation}

Then from an explicit one-way distillation protocol, we can see that $\rho_{AC}$ is one-way distillable, i.e., $\rho_{AC}\notin \uo$~\cite{horodecki2005locking}.
A direct calculation gives
\begin{equation}
S(\rho_A) = S(\rho_{AC}) = 1 + \log d,
\quad
S(\rho_{C}) =  \log d,
\end{equation}
since $\rho_A = \frac{1}{2d}I_{A}$ and $\rho_C = \frac{1}{d}I_{C}$.
So, 
$S(\rho_{AC}) = S(\rho_{A})$ and $S(\rho_{AC}) > S(\rho_{C})$,
which implies $\rho_{AC}\in \condrr\setminus\maj$ 
from the fact that $\rho_A \nsucc \rho_{AC}$ and $\rho_C \nsucc \rho_{AC}$.

Let $\rho_{AB}$ be the complementary state of $\rho_{AC}$, i.e., 
\[
\rho_{AB}=\Tr_C(\ket{\psi}\bra{\psi}_{ABC}),
\]
where $\ket{\psi}_{ABC}$ is a purification of $\rho_{AC}$.
Then since $\rank(\rho_{AB})=\rank(\rho_C)=d$ and $\rank(\rho_{B})=d^2+1$, 
\[\rank(\rho_{AB}) < \max\{\rank(\rho_A),\rank(\rho_B)\},\] and hence
$\rho_{AB}$ is distillable~\cite{horodecki2003rank}.
\end{Ex}
This example confirms that even when $\rho_{AC}$ is one-way distillable, the CMoE remains valid for $\ket{\psi}_{ABC}$ if it is in $\condrr$, as in Theorem~\ref{thm:full_equivalence}. 
Hence, by exchanging the systems $B$ and $C$, 
this state also satisfies the CMoE as represented by Theorem~\ref{thm:full_extension}, but it cannot be explained by the previous results about the CMoE.

Finally, Corollary~2.13 of Ref.~\cite{singh2023fully} implies a similar structural relation between two bipartite reduced states of a tripartite pure state.  
By combining Theorems~\ref{thm:full_extension} and~\ref{thm:full_equivalence}, 
we obtain the following \textit{maximally extended} form that contains their result.

\begin{Cor}\label{thm:summary2}
For a tripartite pure state $\ket{\psi}_{ABC}$, the following statements are equivalent:
\begin{enumerate}
    \item $\rho_{AB}$, $\rho_{AC} \in \sep$,
    \item $\rho_{AB}$, $\rho_{AC} \in \ppt$,
    \item $\rho_{AB}$, $\rho_{AC} \in \ut$,
    \item $\rho_{AB} \in \ut$, $\rho_{AC} \in \uo$ (or vice versa),
    \item $\rho_{AB} \in \ut$, $\rho_{AC} \in \condrr$ (or vice versa).
\end{enumerate}
\end{Cor}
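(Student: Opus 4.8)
The plan is to prove Corollary~\ref{thm:summary2} as a cycle of implications, exploiting the symmetry between the two reduced states $\rho_{AB}$ and $\rho_{AC}$ that is guaranteed by the Schmidt decomposition of the pure state $\ket{\psi}_{ABC}$. First I would observe that the implications $(i)\Rightarrow(ii)\Rightarrow(iii)$ are immediate from the hierarchy in Eq.~(\ref{eq:hierarchy2}), applied separately to each of $\rho_{AB}$ and $\rho_{AC}$; likewise $(iii)\Rightarrow(iv)\Rightarrow(v)$ follow at once from the inclusions $\ut\subset\uo\subset\condrr$, since weakening the membership requirement on one of the two states only enlarges the class of admissible states. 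Thus the only substantive content lies in closing the cycle, i.e.\ showing $(v)\Rightarrow(i)$.

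To establish $(v)\Rightarrow(i)$, I would invoke Theorem~\ref{thm:full_equivalence} directly in the orientation it is stated. Assume $\rho_{AB}\in\ut$ and $\rho_{AC}\in\condrr$. Then the hypothesis $\rho_{AC}\in\condrr$ is exactly the premise of Theorem~\ref{thm:full_equivalence} (with the roles of $B$ and $C$ interchanged relative to the theorem's statement), so $\rho_{AB}\in\ut$ forces $\rho_{AB}\in\sep$. It remains to upgrade $\rho_{AC}\in\condrr$ to $\rho_{AC}\in\sep$. For this I would feed the conclusion back through the pure-state symmetry: now that $\rho_{AB}\in\sep$, Theorem~\ref{thm:full_extension} applies with the assumption $\rho_{AB}\in\ut$ (valid since $\sep\subset\ut$), yielding the full equivalence for $\rho_{AC}$, and in particular $\rho_{AC}\in\condrr$ then collapses to $\rho_{AC}\in\sep$. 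This two-step bootstrapping — first separability of one reduced state, then using that to trigger the full hierarchical collapse for the other — is the crux of the argument.

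I would then handle the ``or vice versa'' clauses in $(iv)$ and $(v)$ by noting that the entire construction is symmetric under exchanging the labels $B\leftrightarrow C$, which merely swaps $\rho_{AB}\leftrightarrow\rho_{AC}$ and leaves the purity of $\ket{\psi}_{ABC}$ untouched; hence no separate argument is needed for the mirrored hypotheses. Assembling these pieces gives the closed cycle $(i)\Rightarrow(ii)\Rightarrow(iii)\Rightarrow(iv)\Rightarrow(v)\Rightarrow(i)$ and therefore the stated equivalence.

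The main obstacle I anticipate is the careful orientation of the asymmetric criterion $\condrr$ across the $B\leftrightarrow C$ exchange: because $\condrr$ is defined by $S(\rho_{AB})\ge S(\rho_B)$ rather than the symmetric $\cond$, one must verify that the spectral correspondence $\rho_{AC}\sim\rho_B$ (and $\rho_{AB}\sim\rho_C$) furnished by the Schmidt decomposition lines up the entropy inequality with the correct reduced state when Theorem~\ref{thm:full_equivalence} is applied. Once this bookkeeping is pinned down, the proof reduces to the routine chaining of inclusions described above, and the substantive work is entirely absorbed into the two previously established theorems.
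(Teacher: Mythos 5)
Your proof is correct and follows the paper's intended route: the corollary is presented as a direct combination of Theorems~\ref{thm:full_extension} and~\ref{thm:full_equivalence}, and your cycle $(i)\Rightarrow(ii)\Rightarrow(iii)\Rightarrow(iv)\Rightarrow(v)\Rightarrow(i)$ with the two-step bootstrap for $(v)\Rightarrow(i)$ is exactly that combination. One cosmetic slip: in your first application of Theorem~\ref{thm:full_equivalence} no interchange of $B$ and $C$ is needed, since its hypothesis is already $\rho_{AC}\in\condrr$ with the conclusion stated for $\rho_{AB}$; the relabeling is only required in the subsequent application of Theorem~\ref{thm:full_extension} to upgrade $\rho_{AC}\in\condrr$ to $\rho_{AC}\in\sep$.
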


\subsection{Applications of our CMoE} 
We now present two corollaries as direct applications of Theorem~\ref{thm:full_equivalence} to bound entanglement and quantum capacities for quantum channels, respectively. The first one related to bound entanglement is as follows.

\begin{Cor}
If $\rho_{AB}$ is bound entangled, i.e., $\rho_{AB} \in \ut \setminus \sep$, then its complementary state $\rho_{AC}$ has negative quantum conditional entropy (or equivalently, $\rho_{AB}$ has positive quantum conditional entropy), i.e., $S(A|C)_{\rho_{AC}}<0$ (or equivalently, $S(A|B)_{\rho_{AB}}>0$).
\end{Cor}

For a quantum channel $\mathcal{E}$ from a quantum system $A$ to a quantum system $B$, let the one-shot quantum capacity $Q^{(1)}(\mathcal{E})$ and the asymptotic quantum capacity $Q(\mathcal{E})$ be defined by
\begin{align}
Q^{(1)}(\mathcal{E}) &= \max_{\rho}\left[S(\mathcal{E}(\rho)) - S(\mathcal{E}^c(\rho))\right], 
\label{eq:Q1}\\
Q(\mathcal{E}) &= \lim_{n\rightarrow\infty}\frac{Q^{(1)}(\mathcal{E}^{\otimes n})}{n},
\label{eq:Q}
\end{align}
where $\mathcal{E}^c$ is the complementary channel from the quantum system $A$ to the environment system $C$ of $\mathcal{E}$, i.e., $\mathcal{E}^c(\rho)=\Tr_B(U\rho U^\dagger)$ for any state $\rho$ when $\mathcal{E}(\rho)=\Tr_C(U\rho U^\dagger)$ for some isometry $U$ from $A$ to $BC$ by Stinespring dilation theorem.  
Then it follows that for any quantum channel $\mathcal{E}$, 
\[
Q(\mathcal{E}) \ge Q^{(1)}(\mathcal{E}) \ge I^c_\rightarrow(J_{AB}(\mathcal{E})),
\]
where $J_{AB}(\mathcal{E})$ denotes the Choi state of $\mathcal{E}$, i.e., $J_{AB}(\mathcal{E})=(\id\otimes\mathcal{E})(\ket{\psi^+}\bra{\psi^+})$, 
where $\ket{\psi^+}$ is the maximally entangled state in Eq.~(\ref{eq:psi}) and $d=\dim(A)$, and $I^c_\rightarrow$ denotes the coherent information, which equals the negation of the quantum conditional entropy, i.e., $I^c_\rightarrow(\rho_{AB})=-S(A|B)_{\rho_{AB}}$.  
Theorem~\ref{thm:full_equivalence} then yields the following result concerning quantum capacities.

\begin{Cor}
Let $\mathcal{E}$ be a quantum channel whose complementary channel has zero quantum capacity (or zero one-shot quantum capacity), that is, $Q(\mathcal{E}^c)=0$ (or $Q^{(1)}(\mathcal{E}^c)=0$).  
Then $\mathcal{E}$ cannot be entanglement-binding; in other words, if $\mathcal{E}$ is a PPT channel, it must be entanglement-breaking.  
Equivalently, if the complementary channel of $\mathcal{E}$ is entanglement-binding, then $\mathcal{E}$ possesses strictly positive quantum and one-shot quantum capacities, that is, $Q(\mathcal{E})\ge Q^{(1)}(\mathcal{E})>0$.
\end{Cor}

\subsection{Our CMoE is maximal}

In the previous subsection, we hierarchically formulated two generalizations of the CMoE.  
A natural question may here arise: how far can such generalizations be extended?  
We now show that the proposed extensions are maximal within the hierarchy we consider in this work.

Within this hierarchy, Theorem~\ref{thm:full_extension} would be further extended if it remained valid when the assumption on $\rho_{AC}$ were replaced with $\rho_{AC} \in \red$ or $\rho_{AC} \in \uo$.  
Similarly, Theorem~\ref{thm:full_equivalence} would extend further if $\rho_{AB} \in \red$ or $\rho_{AB} \in \uo$ were added in the equivalence conditions for $\rho_{AB}$ of Theorem~\ref{thm:full_equivalence}.  
However, an explicit counterexample shows that such extensions do not hold, as follows.

Consider the following tripartite pure state: let $\ket{\psi}_{ABC}$ be a state given by
\begin{eqnarray*}
\ket{\psi}_{ABC}
&=&\frac{1}{\sqrt{6}}\left(
\ket{012}-\ket{102}-\ket{021}+\ket{201}+\ket{120}-\ket{210}
\right) \\
&=&\frac{1}{\sqrt{3}}\left(\ket{0}\ket{\psi_{12}^-}+\ket{1}\ket{\psi_{20}^-}+\ket{2}\ket{\psi_{01}^-}\right),
\end{eqnarray*}
where \[\ket{\psi_{ij}^-}=\frac{1}{\sqrt{2}}\left(\ket{ij}-\ket{ji}\right).\]
In this case, $\rho_{AB} = \rho_{AC} \in \uo$~\cite{cubitt2008structure}. 
Moreover, since the maximal eigenvalue of $\rho_{AB}$ is $1/3$,  
\begin{equation}
\rho_A \otimes I_B - \rho_{AB} = I_A \otimes \rho_B - \rho_{AB} = \frac{1}{3} I_9 - \rho_{AB} \ge 0,
\end{equation}
while $\rho_{AB}^\Gamma$ has a negative eigenvalue $-\frac{1}{3}$, implying $\rho_{AB} \notin \ppt$.  
Since $\rho_{AB}$, $\rho_{A}$, and $\rho_{B}$ are isospectral, $\rank(\rho_{AB}) = \max \{ \rank(\rho_A), \rank(\rho_B) \}$, and Proposition~\ref{prop::undsep} hence ensures that $\rho_{AB}$ is distillable.  
Therefore, both $\rho_{AB}$ and $\rho_{AC}$ lie in $\uo \setminus \ut$, indicating that Theorems~\ref{thm:full_extension} and~\ref{thm:full_equivalence} cannot be extended further.   
This is due to the fact that 
when $\rho_{AC} \notin \ut$, the total hierarchical collapse in Theorem~\ref{thm:full_extension} does not need to occur, and   
although $\rho_{AC} \in \condrr$, the hierarchy combining that in Theorem~\ref{thm:full_equivalence} with $\rho_{AB}\in \uo$ or $\rho_{AB}\in \red$ does not exhibit a collapse.
We summarize our findings as follows.
\begin{Thm}[Maximal extension of CMoE]\label{thm:summary}
Within the hierarchy in Fig.~(\ref{Fig:Hierarchy}), neither Theorem~\ref{thm:full_extension} nor Theorem~\ref{thm:full_equivalence} admits any weaker implications.
\end{Thm}


\section{Conclusion}\label{sec:con}

The distribution of quantum entanglement is constrained by the monogamy of entanglement, whereas its converse relation, the CMoE, can be understood more precisely through the hierarchy formed by various separability criteria. 
From this perspective, we revisited the two previously known forms of the CMoE and characterized the maximal extensions attainable within this hierarchical structure. 
Consequently, the CMoE relations are unified within a single coherent framework, and each extension is shown to be a maximal form that cannot be further generalized.

This result clarifies that the CMoE is not merely a comparison between individual separability conditions, but rather a mathematical structure that describes how far the collapse can occur within the hierarchy of bipartite entanglement (from separability to undistillability). 
In particular, we confirmed that when one of the bipartite reduced states is in $\ut$ and another one is in $\condrr$, the complementary state necessarily becomes $\sep$. 
This observation reveals that the result of Singh and Datta emerges as a special instance of our maximal extension. 
Such a hierarchical interpretation provides a framework for viewing the CMoE as a manifestation of the complementarity between strong and weak entanglement.

Meanwhile, the hierarchy of separability criteria underlying the definition of the CMoE bifurcates into two principal branches, $\red$ and $\uo$, 
when it comes to the classes containing $\ut$. 
Although these two sets are clearly distinct, their exact inclusion relation remains unresolved. 
If it were proven that $\red \subset \uo$, the entire hierarchical structure of the CMoE could be formulated in a simpler and more transparent form. 
Determining this relationship is therefore essential for a complete understanding of the maximal extensibility of the CMoE, and it may offer valuable insight into the asymmetry of entanglement distribution and the origin of informational complementarity.

\section*{ACKNOWLEDGMENTS}
We thank Hyukjoon Kwon and Ray Ganardi for discussions.
This research was supported by
the Institute of Information \& Communications Technology Planning \& Evaluation (IITP) grant funded by the Ministry of Science and ICT (MSIT) (No. RS-2025-02304540). 
S.L. acknowledges support from the National Research Foundation of Korea (NRF) of Korea grants funded by the MSIT  (No. RS-2024-00432214 and No. RS-2022-NR068791) and Creation of the Quantum Information Science R\&D Ecosystem (No. RS-2023-NR068116) through the NRF funded by the MSIT.

\bibliography{full}

\end{document}